\documentclass[12pt,onecolumn]{IEEEtran}
\usepackage{amsmath,amsthm,amssymb,bm,bbm,dsfont}
\usepackage[mathscr]{eucal}
\usepackage{upgreek}
\usepackage{setspace}
\pagestyle{plain}
\usepackage{graphicx}
\usepackage{verbatim}
\usepackage{float}
\usepackage{placeins}
\usepackage{array}
\usepackage{booktabs}
\usepackage{threeparttable}
\usepackage{multirow}
\usepackage{amsfonts,amssymb,dsfont}
\usepackage[abs]{overpic}

\usepackage[usenames,dvipsnames]{color}
\usepackage[hidelinks]{hyperref}
\hypersetup{
%	bookmarks=true,         % show bookmarks bar?
	unicode=false,          % non-Latin characters in Acrobat’s bookmarks
	pdftoolbar=true,        % show Acrobat’s toolbar?
	pdfmenubar=true,        % show Acrobat’s menu?
	pdffitwindow=false,     % window fit to page when opened
	pdfstartview={FitH},    % fits the width of the page to the window
	pdftitle={My title},    % title
	pdfauthor={Author},     % author
	pdfsubject={Subject},   % subject of the document
	pdfcreator={Creator},   % creator of the document
	pdfproducer={Producer}, % producer of the document
	pdfkeywords={keyword1} {key2} {key3}, % list of keywords
	pdfnewwindow=true,      % links in new PDF window
	colorlinks=true,        % false: boxed links; true: colored links
	linkcolor=Red,          % color of internal links (change box color with linkbordercolor)
	citecolor=ForestGreen,  % color of links to bibliography
	filecolor=Magenta,      % color of file links
	urlcolor=BlueViolet,    % color of external links
%	linkcolor=Blue,         % color of internal links (change box color with linkbordercolor)
%	citecolor=ForestGreen,  % color of links to bibliography
%	filecolor=Magenta,      % color of file links
%	urlcolor=Brown,           % color of external links
}
\usepackage{doi}
\usepackage{url}
\usepackage{enumitem}

\makeatletter
\newcommand{\opnorm}{\@ifstar\@opnorms\@opnorm}
\newcommand{\@opnorms}[1]{%
	\left|\mkern-1.5mu\left|\mkern-1.5mu\left|
	#1
	\right|\mkern-1.5mu\right|\mkern-1.5mu\right|
}
\newcommand{\@opnorm}[2][]{%
	\mathopen{#1|\mkern-1.5mu#1|\mkern-1.5mu#1|}
	#2
	\mathclose{#1|\mkern-1.5mu#1|\mkern-1.5mu#1|}
}
\makeatother

% correct bad hyphenation here
\hyphenation{sub-additive}

%% Replace \mathbb by \mathds except if we require the contrary
\makeatletter
\ifx\@NODS\undefined%

\let\mathbb=\mathds
\else%
\fi
\makeatother

% *** BibLatex ***
\usepackage[backend=bibtex,
hyperref=true,
url=true,
isbn=true,
backref=false,
%style=custom-numeric-comp,
%style=numeric-comp,
style=ieee,
%style=alphabetic,
sorting=none,
block=none]{biblatex}
\bibliography{reference.bib}

\DeclareMathOperator{\Tr}{Tr}

\newcommand{\be}{{\mathbf e}}

\def\cX{{\cal X}}
\def\cY{{\cal Y}}        

\def\0{{\mathbf{0}}}
\def\1{{\mathbf{1}}}
\def\2{{\mathbf{2}}}
\def\3{{\mathbf{3}}}
\def\4{{\mathbf{4}}}
\def\5{{\mathbf{5}}}
\def\6{{\mathbf{6}}}

\def\7{{\mathbf{7}}}
\def\8{{\mathbf{8}}}
\def\9{{\mathbf{9}}}

%%%% yfonts packages %%%%

\def\be{\begin{equation}}
\def\ee{\end{equation}}
\def\bea{\begin{eqnarray}}
\def\eea{\end{eqnarray}}

%%%%%%%%%%%%%%%%%%%%% Milan %%%%%%%%%%%%%

%\def\S{{\mathcal S}}

\theoremstyle{plain}
\newtheorem{theo}{Theorem} %[section]%[chapter]
 %[section]
\newtheorem{prop}[theo]{Proposition} %[section]
\newtheorem{lemm}[theo]{Lemma} %[section]
 %[section]
%\newshadetheorem{question}{Question}

\theoremstyle{definition}
 %[section]

%\newtheorem*{proof}{Proof}

\theoremstyle{remark}
%\newtheorem*{remark}{Remark}%[section]

%\numberwithin{equation}{section}

\begin{document}

\title{{ On the Concavity of Auxiliary Function in Classical-Quantum Channels}}

\author{
\IEEEauthorblockN{Hao-Chung Cheng \\}
\IEEEauthorblockA{National Taiwan University, Taiwan (R.O.C.) \&\\
% \&\\
University of Technology Sydney, Australia \\
Email: \texttt{\href{mailto:F99942118@ntu.edu.tw}{F99942118@ntu.edu.tw}}}
\and
\IEEEauthorblockN{\\ Min-Hsiu Hsieh \\}
\IEEEauthorblockA{
University of Technology Sydney, Australia \\
Email: \texttt{\href{mailto:Min-Hsiu.Hsieh@uts.edu.au}{Min-Hsiu.Hsieh@uts.edu.au}}}
}

\maketitle

\begin{abstract}
The auxiliary function of a classical channel appears in two fundamental quantities that upper and lower bound the error probability, respectively. A crucial property of the auxiliary function is its concavity, which leads to several important results in finite block length analysis. In this paper, we prove that the auxiliary function of a classical-quantum channel also enjoys the same concave property, extending an earlier partial result to its full generality. The key component in our proof is a beautiful result of geometric means of operators. 
\end{abstract}

\section{Introduction} \label{sec:introduction}
Denote by $\mathscr{P}(\mathcal{X})$ the set of probability distributions on a finite set $\mathcal{X} = \{1,2,\ldots, |\mathcal{X}| \}$. For any fixed $P\in\mathscr{P}(\mathcal{X})$ and $s\geq0$, the \emph{auxiliary function} $E_0(s,P)$ of a classical communication channel $Q(y|x)$ with the output set $\mathcal{Y} = \{1,2,\ldots, |\mathcal{Y}| \}$ is defined as
\begin{equation}\label{eq_cE0}
E_0(s,P) \triangleq -\log\left[\sum_{y\in\cY} \left(\sum_{x\in\cX} P(x)Q(y|x)^{\frac{1}{1+s}}\right)^{1+s}\right].
\end{equation}
This function appears in two fundamental quantities in classical information theory: for any $R\geq 0$,
\begin{align} \label{eq:random}
E_\text{r}(R) \triangleq \max_{0\leq s\leq 1} \left\{
\max_{P\in\mathscr{P}(\mathcal{X})} E_0(s,P) - sR
\right\},
\end{align}
and
\begin{align} \label{eq:sp}
E_\text{sp}(R) \triangleq \sup_{s\geq 0}\left\{
\max_{P\in\mathscr{P}(\mathcal{X})} E_0(s,P) - sR
\right\},
\end{align}
where $E_{\rm r}(R)$ is called the \emph{random coding exponent} and $E_{\rm sp}(R)$ is called the \emph{sphere-packing exponent} of the classical channel $Q$. These two quantities are critical since, for any block length $n$ and any rate $R\geq0$, the error probability $P_e(n,R)$, minimized over all  possible coding strategies, satisfies \cite{Gal68}
\begin{equation}\label{eq_cPe}
2^{-n E_{\rm sp}(R)} \leq P_e(n,R)\leq 2^{-n E_{\rm r}(R)}.
\end{equation}
Consequently, properties of the auxiliary function $E_0(s,P)$ reveal important functional behaviour of the two exponents, and lead to a deeper understanding of the error probability of a given classical channel $Q$. It is well-known (and easy to show) \cite{Gal68}: $\forall s\geq 0$,
\begin{eqnarray}
E_0(s,P) &\geq& 0; \\
\frac{\partial E_{0}(s,P)}{\partial s} &>& 0; \\
\frac{\partial^2 E_{0}(s,P)}{\partial s^2} &\leq& 0. \label{eq:concavity}
\end{eqnarray}
It turns out that $E_0(s,P)$ is concave in $s\geq 0$. In addition to other important contributions in finite block length analysis, this fact also provides an alternative proof to Shannon's noiseless channel coding theorem \cite{Sha48}.
 
In recent years, much attention has been paid to understanding the error probability of a quantum channel. In this scenario, it suffices to consider a \emph{classical-quantum channel}, which is a mapping $W:x\in\cX\mapsto W_x\in\mathcal{S(H)}$ from the finite set $\mathcal{X}$ to $\mathcal{S(H)}$, i.e., the set of density operators (positive semi-definite operators with unit trace) on a fixed Hilbert space $\mathcal{H}$. Given a classical-quantum channel $W$ and a distribution $P$ on the input $\cX$, we can similarly define the \emph{auxiliary function} $E_0(s,P)$\footnote{Here, we slightly abuse the notation since it should be clear from the context the underlining channel is quantum or classical.} \cite{BH98, Hol00}: $\forall s\geq0$,
\begin{align} \label{eq:E0}
E_0(s,P) \triangleq -\log \Tr \left[
\left( \sum_{x\in\mathcal{X}} P(x) \cdot W_x^{\frac1{1+s}}\right)^{1+s}
\right].
\end{align}
 This quantity is a quantum generalization of Eq.~(\ref{eq_cE0}), and recovers Eq.~(\ref{eq_cE0}) when all $\{W_x\}_{x\in\cX}$ commute. 
 
The auxiliary function $E_0(s,P) $ in Eq.~(\ref{eq:E0}) also appears in the random coding exponent $E_{\rm r}(R)$ and the sphere-packing exponent $E_{\rm sp}(R)$ of a classical-quantum channel $W$, which can be similarly defined as that in Eqs.~(\ref{eq:random}) and (\ref{eq:sp}), respectively. However, the relations between these two exponents and the error probability of the underlining classical-quantum channel $W$ are much harder to obtain. The random coding exponent $E_{\rm r}(R)$  is shown to be an upper bound to the error probability of a classical-quantum channel $W$ when every $W_x$ is pure (i.e. the density operator $W_x$ is a rank-one matrix) in Ref.~\cite{BH98}, and it is conjectured to hold for general quantum states. Furthermore, the sphere-packing bound that lower bounds the error probability of $W$ was recently proved in Ref.~\cite{Dal13}\footnote{However, this bound only works in the asymptotic regime $n\to\infty$, unlike the classical case in Eq.~(\ref{eq_cPe}) that holds for any $n\in\mathbb{N}$ and $R\geq0$.}. These results are highly nontrivial due to the non-commutative nature of the density operators involved in their definitions. Many important questions in quantum information theory are still left open. Notably, it is still unknown whether the auxiliary function $E_0(s,P) $ in Eq.~(\ref{eq:E0}) is concave for all $s\geq0$. This might be one reason that the error probability of any finite block length $n$ is less understood in the quantum regime.  Note that $E_0(s,P) $ has been shown to be concave in $0\leq s \leq 1$ in Ref.~\cite{FY06}. Its proof relies on an \emph{ad-hoc} operator inequality in order to show that the second-order derivative of $E_0(s,P)$ is non-positive for $s\in [0,1]$. However, this method seems impossible to work for all $s\geq 0$.

In this paper, we are able to prove that $E_0(s,P)$ of a classical-quantum channel $W$ is concave for all $s\geq 0$. Our proof culminates the latest development of operator algebra; in particular, the beautiful theory of a general geometric mean of operators \cite{KA80}. Our proof can be viewed as a direct generalization of its classical proof in Ref.~\cite[Theorem 5.6.3]{Gal68}. 

The paper is organized as follows. Sec~\ref{GM} presents the main technical tool, i.e., the ``$s$-weighted geometric mean". The main result is presented in Sec~\ref{main}, and our conclusion is given in Sec~\ref{sec:conclusion}. 

\section{Technical Tools} \label{GM}
Denote by $\mathbb{M}_d^+$ and $\mathbb{M}_d^{++}$ the set of positive semi-definite matrices and positive definite matrices, respectively. 
For two $d\times d$ Hermitian matrices $A$ and $B$, we denote by $A \succeq B$ if $A-B \in \mathbb{M}_d^+$. Let $A,B\in\mathbb{M}_d^{++}$. Then the ``$s$-weighted geometric mean" of $A$ and $B$ is defined as
\begin{align}
A\#_s B \triangleq A^{1/2} \left( A^{-1/2} B A^{-1/2} \right)^s A^{1/2}.
\end{align}
The geometric mean enjoys following properties \cite{KA80,CPR94,LL07} (see also \cite[Chapter 6]{Bha09} and \cite[Section 4]{Hia97}).
\begin{prop} 
	[Properties of Geometric Means] \label{prop:properties}
	Let $A,B,C,D\in\mathbb{M}_d^{++}$ and $s\in\mathbb{R}$. Then
	\begin{itemize}
		\item [\textnormal{(a)}] \textnormal{Commutativity:} $A\#_s B = A^{1-s} B^s$ for $AB=BA$;

		\item [\textnormal{(b)}] \textnormal{Joint homogeneity:} $(aA)\#_s(bB) = a^{1-s}b^s (A\#_s B)$ for $a,b>0$;

		\item [\textnormal{(c)}] \textnormal{Monotonicity:} $A\#_s B \preceq C\#_s D$ for $A\preceq C$, $B\preceq D$ and $s\in[0,1]$;

		\item [\textnormal{(d)}] \textnormal{Congruence invariance:} For every non-singular matrix $M$, $M(A\#_sB)M^\dagger = \left(MAM^\dagger\right) \#_s \left( MBM^\dagger \right)$;

		\item [\textnormal{(e)}] \textnormal{Self-duality:} $A \#_s B = B\#_{1-s} A$, and $ \left( A\#_s B\right)^{-1} = A^{-1} \#_s B^{-1}$;

		\item [\textnormal{(f)}] \textnormal{Concavity:} 
		\begin{align}
		\begin{split}
		&\left(\lambda A + (1-\lambda)B \right)\#_s \left(\lambda C + (1-\lambda)D \right)\\
		&\succeq
		 \lambda \left( A\#_s C\right) + (1-\lambda)\left( B\#_s D\right)
		 \end{split}
		\end{align}
		 for all $\lambda, s\in[0,1]$;

		\item [\textnormal{(g)}] \textnormal{HM-GM-AM inequality:} $\left( (1-s) A^{-1} + s B^{-1} \right)^{-1} \preceq A\#_s B \preceq (1-s) A + sB$ for $s\in[0,1]$.	 
		
		\item[\textnormal{(h)}] \textnormal{Continuity:} $A\#_s B$ is continuous in $A$ and $B$.
	\end{itemize}
\end{prop}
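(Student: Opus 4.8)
The plan is to take congruence invariance (d) as the structural backbone and to deduce almost every other item from it, reserving joint concavity (f) as the one genuinely hard statement. First I would dispose of the elementary items. Commutativity (a) and joint homogeneity (b) follow by substituting into the definition and simplifying: when $AB=BA$ one has $A^{-1/2}BA^{-1/2}=A^{-1}B$, and in (b) the scalars pull straight out of the functional calculus. Continuity (h) is immediate from the continuity of $X\mapsto X^s$ on positive definite matrices together with continuity of the square root.

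Next I would prove congruence invariance (d), which holds for all real $s$. The key observation is that $N:=(MAM^\dagger)^{-1/2}MA^{1/2}$ is unitary, since $NN^\dagger=(MAM^\dagger)^{-1/2}(MAM^\dagger)(MAM^\dagger)^{-1/2}=I$. Writing out $(MAM^\dagger)\#_s(MBM^\dagger)$ and inserting $N$ gives $(MAM^\dagger)^{-1/2}MBM^\dagger(MAM^\dagger)^{-1/2}=N(A^{-1/2}BA^{-1/2})N^\dagger$; pushing the $s$-th power through the unitary $N$ and using $(MAM^\dagger)^{1/2}N=MA^{1/2}$ collapses the whole expression to $M(A\#_s B)M^\dagger$. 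This single identity is the workhorse.

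With (d) in hand, the trick is to conjugate by $A^{-1/2}$, which reduces the first argument to the identity: since $I\#_s C=C^s$, property (d) yields $A^{-1/2}(A\#_s B)A^{-1/2}=(A^{-1/2}BA^{-1/2})^s$, so every claim about $A\#_s B$ becomes a claim about a function of the single operator $C:=A^{-1/2}BA^{-1/2}$. Self-duality (e) then follows from the single-argument identities $I\#_s C=C^s=C\#_{1-s}I$ (pure functional calculus of $C$) plus a direct inversion check. HM-GM-AM (g) reduces, under the same conjugation, to the scalar weighted-means inequality
\begin{equation}
\left((1-s)+st^{-1}\right)^{-1}\le t^{s}\le (1-s)+st,\qquad t>0,\ s\in[0,1],
\end{equation}
applied through the functional calculus of the single operator $C$, so no non-commutativity is involved. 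Monotonicity (c) also follows: with the first argument fixed it is exactly operator monotonicity of $t\mapsto t^s$ on $[0,1]$ (the L\"owner--Heinz theorem) applied to $A^{-1/2}BA^{-1/2}\preceq A^{-1/2}DA^{-1/2}$, and the first-argument case is recovered from this via self-duality, since $1-s\in[0,1]$ as well; chaining the two gives the full statement.

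The hard part will be joint concavity (f), where the reductions above no longer help because two genuinely different pairs are being averaged. My plan is to route through parallel sums. Using the L\"owner integral representation $t^{s}=\frac{\sin \pi s}{\pi}\int_0^\infty \frac{t}{t+\lambda}\lambda^{s-1}\,\d\lambda$ for $s\in(0,1)$ and the algebraic simplification $A^{1/2}\tfrac{C}{C+\lambda}A^{1/2}=A(\lambda A+B)^{-1}B$, one writes $A\#_s B$ as a positive-weight integral of (scaled) parallel sums $(\lambda A):B$. Each parallel sum is jointly concave in its two arguments, which I would establish from the Anderson--Trapp variational formula
\begin{equation}
\langle \xi,(X:Y)\xi\rangle=\min_{\eta+\zeta=\xi}\left[\langle\eta,X\eta\rangle+\langle\zeta,Y\zeta\rangle\right],
\end{equation}
since a pointwise minimum of forms jointly linear (hence jointly concave) in $(X,Y)$ is jointly concave; concavity then survives integration against the positive measure $\lambda^{s-1}\,\d\lambda$. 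The delicate points, and the main obstacle, are justifying the integral representation at the operator level and interchanging the operator order relation with the integral; the virtue of the variational formula is precisely that it converts the non-commutative inequality into a convexity statement about scalar quadratic forms. The boundary cases $s=0,1$ are trivial, and an alternative for dyadic $s$ is Ando's characterization $A\#_{1/2}B=\max\{X=X^\dagger:\left(\begin{smallmatrix}A&X\\ X&B\end{smallmatrix}\right)\succeq0\}$ combined with the composition identity $A\#_s(A\#_t B)=A\#_{st}B$ and continuity (h).
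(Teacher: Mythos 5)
Your proposal is essentially correct, but there is nothing in the paper to compare it against: the paper states Proposition~1 as known background and simply cites Kubo--Ando, Corach--Porta--Recht, Lawson--Lim, and Bhatia's book, giving no proof of its own. What you have written is in effect a reconstruction of the standard treatment in those cited references: congruence invariance via the unitary $N=(MAM^\dagger)^{-1/2}MA^{1/2}$, then the reduction $A^{-1/2}(A\#_s B)A^{-1/2}=\left(A^{-1/2}BA^{-1/2}\right)^s$ turning (a), (b), (e), (g), (h) into scalar functional-calculus statements; L\"owner--Heinz for monotonicity (c), with self-duality transferring it to the first argument; and, for the genuinely hard item (f), the L\"owner integral representation writing $A\#_s B$ as a positive superposition of (scaled) parallel sums, whose joint concavity follows from the Anderson--Trapp variational formula because a pointwise minimum of jointly affine quadratic forms is jointly concave. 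Two remarks. First, the points you flag as delicate are not actually delicate in the finite-dimensional setting of the paper: the representation $C^s=\frac{\sin\pi s}{\pi}\int_0^\infty C(C+\lambda)^{-1}\lambda^{s-1}\,\mathrm{d}\lambda$ follows eigenvalue-by-eigenvalue from the spectral decomposition, and exchanging the operator order with the integral is immediate since $\langle\xi,\left(\int F(\lambda)\,\mathrm{d}\lambda\right)\xi\rangle=\int\langle\xi,F(\lambda)\xi\rangle\,\mathrm{d}\lambda$; one only needs integrability at the endpoints, which follows from $(\lambda A):B\preceq\lambda A$ and $(\lambda A):B\preceq B$. Second, your argument is not fully self-contained---L\"owner--Heinz and Anderson--Trapp enter as black boxes---but that is consistent with (indeed goes beyond) what the paper itself does, and both are classical results; your alternative dyadic route via Ando's maximal characterization together with $A\#_s(A\#_t B)=A\#_{st}B$ and continuity is also sound.
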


Let $x\triangleq (x_1,\ldots, x_d) \in \mathbb{R}^d$ be a $d$-dimensional vector. Denote by $x^{\downarrow}\triangleq (x_1^\downarrow, \ldots, x_d^\downarrow)$ the \emph{decreasing arrangement} of $x$, i.e.~$x_1^\downarrow\geq \cdots \geq x_d^\downarrow$. 
We say that $x$ is \emph{weak majorized} by $y$, denoted by $x\prec_w y$, if
\begin{align}
\sum_{j=1}^k x_j^\downarrow \leq \sum_{j=1}^k y_j^\downarrow, \; 1\leq k \leq d.
\end{align}
The \emph{weak log-majorization} is $x\prec_\textnormal{log} y$ is defined when $\log x \prec_\textnormal{log} \log y$. It is well-known that $\log x \prec_\textnormal{log} \log y$ implies $x\prec_w y$ \cite[Example II.3.5]{Bha97}.
For two Hermitian matrices $A$ and $B$, if $\lambda(A) \prec_w \lambda(B)$, then $\opnorm{A} \leq \opnorm{B}$ for any unitarily-invariant norm $\opnorm{\,\cdot\,}$ \cite[Theorem 6.18]{HJ14}.

In the following, we collect a few lemmas that will be used in the main proof.
\begin{lemm}[Matharu \& Aujla {\cite[Theorem 2.10]{MA12}}] \label{lemm:MA12}
	For any $A,B\in\mathbb{M}_d^{++}$, and $0\leq s \leq 1$. Then
	\begin{align}
	\lambda( A \#_s B) \prec_\textnormal{log} \lambda\left( A^{1-s} B^{s} \right).
	\end{align}
\end{lemm}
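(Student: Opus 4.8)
\medskip
\noindent\emph{Proof strategy.}
Since $\prec_{\log}$ only constrains the products of the $k$ largest eigenvalues, the plan is to first rewrite the claim as the family
\begin{align}
\prod_{j=1}^{k}\lambda_j^{\downarrow}(A\#_s B)\le\prod_{j=1}^{k}\lambda_j^{\downarrow}\!\left(A^{1-s}B^{s}\right),\qquad 1\le k\le d,
\end{align}
and then to collapse this family to a single spectral inequality via antisymmetric tensor powers. Writing $\wedge^{k}M$ for the $k$-th antisymmetric tensor power (compound matrix) of $M$, recall that $\prod_{j=1}^{k}\lambda_j^{\downarrow}(M)=\lambda_{\max}(\wedge^{k}M)$ for $M\in\mathbb{M}_d^{++}$, and that the same identity with the spectral radius $\rho$ holds whenever $M$ merely has positive spectrum, as does $M=A^{1-s}B^{s}$ (which is similar to the positive definite $B^{s/2}A^{1-s}B^{s/2}$).

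The lever is that $\wedge^{k}$ is multiplicative, preserves adjoints and positivity, and commutes with the continuous functional calculus; it therefore commutes with the square roots and powers defining the geometric mean, giving
\begin{align}
\wedge^{k}(A\#_s B)=(\wedge^{k}A)\#_s(\wedge^{k}B),\qquad \wedge^{k}\!\left(A^{1-s}B^{s}\right)=(\wedge^{k}A)^{1-s}(\wedge^{k}B)^{s}.
\end{align}
Applying these with $\tilde A=\wedge^{k}A$ and $\tilde B=\wedge^{k}B$, still positive definite, the $k$-th inequality becomes $\lambda_{\max}(\tilde A\#_s\tilde B)\le\rho(\tilde A^{1-s}\tilde B^{s})$. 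Hence it suffices to prove, for all $A,B\in\mathbb{M}_d^{++}$ and $s\in[0,1]$, the single operator inequality
\begin{align}
\lambda_{\max}(A\#_s B)\le\rho\!\left(A^{1-s}B^{s}\right),\qquad\text{i.e.}\qquad A\#_s B\preceq\rho\!\left(A^{1-s}B^{s}\right)I.
\end{align}

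This base inequality is where the real work lies, and I expect it to be the main obstacle. The arithmetic--geometric bound (g), $A\#_s B\preceq(1-s)A+sB$, only controls $\lambda_{\max}(A\#_s B)$ by an \emph{arithmetic} mean, and one checks that a fixed congruence need not preserve $\prec_{\log}$, so no purely soft manipulation of Properties (a)--(h) reaches the sharp spectral-radius bound. Instead I would invoke the Araki--Lieb--Thirring circle of log-majorizations and route through the matrix exponential,
\begin{align}
\lambda(A\#_s B)\prec_{\log}\lambda\!\left(e^{(1-s)\log A+s\log B}\right)\prec_{\log}\lambda\!\left(A^{1-s}B^{s}\right).
\end{align}
The first step is the complementary Golden--Thompson inequality of Ando--Hiai, obtained from the key lemma ``$A\#_s B\preceq I\Rightarrow A^{r}\#_s B^{r}\preceq I$ for $r\ge1$'' together with the Lie--Trotter limit $\lim_{r\to\infty}(A^{1/r}\#_s B^{1/r})^{r}=e^{(1-s)\log A+s\log B}$; the second is the log-majorized Golden--Thompson inequality $\lambda(e^{H+K})\prec_{\log}\lambda(e^{H/2}e^{K}e^{H/2})$ with $H=(1-s)\log A$ and $K=s\log B$. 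Transitivity of $\prec_{\log}$ then yields the base inequality and hence the lemma.

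Finally, I would record that the case $k=d$ gives $\det(A\#_s B)=\det(A)^{1-s}\det(B)^{s}=\det(A^{1-s}B^{s})$, so the largest products coincide and the majorization is in fact an equality there; only the weak form stated in the lemma is needed downstream.
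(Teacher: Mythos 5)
Your proposal is correct, but note that the paper itself never proves this lemma: it is imported wholesale as a citation to Matharu--Aujla \cite[Theorem 2.10]{MA12}, so there is no internal proof to compare against. What you have written is a legitimate self-contained derivation, and it is essentially the classical route by which this log-majorization was first established (it already appears in the Ando--Hiai log-majorization paper and in Hiai's survey, cited as \cite{Hia97} in this paper): reduce via antisymmetric tensor powers to the top-eigenvalue inequality, then chain the complementary Golden--Thompson inequality $\lambda(A\#_s B)\prec_{\log}\lambda\bigl(\e^{(1-s)\log A+s\log B}\bigr)$ (key lemma plus the generalized Lie--Trotter limit, both of which you quote in the correct direction) with the log-majorized Golden--Thompson inequality $\lambda\bigl(\e^{H+K}\bigr)\prec_{\log}\lambda\bigl(\e^{H/2}\e^{K}\e^{H/2}\bigr)$, and use similarity of $A^{(1-s)/2}B^sA^{(1-s)/2}$ to $A^{1-s}B^s$. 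Two small remarks. First, your antisymmetric-power reduction is logically harmless but redundant: both invoked results are already full log-majorizations, so transitivity alone finishes the proof; the $\wedge^k$ machinery is only needed \emph{inside} the derivation of the Ando--Hiai step, exactly as you sketch. Second, your blanket claim that $\wedge^k$ ``commutes with the continuous functional calculus'' is false as stated (it fails even for $f(x)=x+1$, since $\wedge^k(A+I)\neq \wedge^k A+I$); it holds precisely for multiplicative functions $f(xy)=f(x)f(y)$, i.e.\ for powers $x\mapsto x^t$, which is all your argument actually uses, so the identities $\wedge^k(A\#_sB)=(\wedge^kA)\#_s(\wedge^kB)$ and $\wedge^k(A^{1-s}B^s)=(\wedge^kA)^{1-s}(\wedge^kB)^s$ stand. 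Your closing observation that $k=d$ gives equality of determinants is also correct and shows you in fact obtain strong log-majorization, slightly more than the weak form the paper needs for Lemma \ref{lemm:our}.
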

\begin{lemm}
	[{\cite[Theorem IX.2.10]{Bha97}}] \label{lemm:Bha1}
	Let $A,B\in\mathbb{M}_d^{+}$. Then for every unitarily-invariant norm $\opnorm{\,\cdot\,}$, we have
	\begin{align}
	&\opnorm{B^tA^tB^t} \leq \opnorm{(BAB)^t},\quad \text{for } 0\leq t\leq 1,\\
	&\opnorm{B^tA^tB^t} \geq \opnorm{(BAB)^t},\quad \text{for } t\geq 1.\\
	\end{align}
\end{lemm}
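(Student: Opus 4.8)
The plan is to reduce the unitarily-invariant-norm inequality to a statement about eigenvalues, and then to reduce that, in turn, to the single simplest case of the operator norm. Note first that $B^tA^tB^t = (A^{t/2}B^t)^\dagger(A^{t/2}B^t)$ and $(BAB)^t$ are both positive semi-definite. Since a unitarily-invariant norm is monotone under weak majorization of eigenvalues, and weak log-majorization implies weak majorization (both recorded above), it suffices to establish the weak log-majorization $\lambda(B^tA^tB^t) \prec_\textnormal{log} \lambda((BAB)^t)$ for $0\le t\le 1$, together with its reverse for $t\ge 1$. By a continuity/perturbation argument (replace $A$ by $A+\eps I$ and let $\eps\downarrow 0$ at the end) I may assume throughout that $A,B\in\mathbb{M}_d^{++}$, so that $B^{-1}$ exists.

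Next I would strip off the log-majorization with antisymmetric tensor powers. Write $\wedge^k M$ for the $k$-th antisymmetric tensor power of $M$. This operation is a multiplicative homomorphism, it commutes with the functional calculus $X\mapsto X^t$ on positive matrices, and it satisfies $\|\wedge^k M\|_\infty = \prod_{j=1}^k \lambda_j^\downarrow(M)$ for $M\in\mathbb{M}_d^{+}$, where $\|\cdot\|_\infty$ denotes the operator norm. Consequently
\[
\prod_{j=1}^k \lambda_j^\downarrow(B^tA^tB^t) = \big\|(\wedge^k B)^t (\wedge^k A)^t (\wedge^k B)^t\big\|_\infty, \quad \prod_{j=1}^k \lambda_j^\downarrow((BAB)^t) = \big\|\big((\wedge^k B)(\wedge^k A)(\wedge^k B)\big)^t\big\|_\infty .
\]
Thus the full family of log-majorization inequalities (one for each $k$) reduces to the single operator-norm inequality $\|B^tA^tB^t\|_\infty \le \|BAB\|_\infty^t$, applied to the positive definite pair $\wedge^k A$ and $\wedge^k B$.

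It then remains only to prove this operator-norm case. By homogeneity I may scale so that $\|BAB\|_\infty = 1$, i.e.\ $BAB\preceq I$; conjugating by $B^{-1}$ gives $A\preceq B^{-2}$. The L\"owner--Heinz inequality (operator monotonicity of $x\mapsto x^t$ on $[0,\infty)$ for $0\le t\le 1$) then yields $A^t\preceq B^{-2t}$, and conjugating by $B^t$ gives $B^tA^tB^t\preceq I$, that is $\|B^tA^tB^t\|_\infty\le 1 = \|BAB\|_\infty^t$. The case $t\ge 1$ follows by the standard substitution: applying the just-proved inequality with exponent $1/t\in(0,1]$ to the matrices $A^t$ and $B^t$ gives $\|BAB\|_\infty\le \|B^tA^tB^t\|_\infty^{1/t}$, i.e.\ the reverse operator-norm inequality; feeding this through the tensor-power step reverses the log-majorization, and hence reverses the norm inequality, as required.

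The main obstacle is precisely the passage from the operator norm to a general unitarily-invariant norm $\opnorm{\,\cdot\,}$. The operator-norm estimate itself is elementary once L\"owner--Heinz is available, but a generic $\opnorm{\,\cdot\,}$ cannot be controlled by so direct a conjugation argument. The antisymmetric-tensor-power construction is exactly the device that converts the product of the top $k$ eigenvalues back into an operator norm while respecting both matrix products and the power $X\mapsto X^t$; recognizing that this reduction is available — so that the whole statement is governed by its $k=1$ instance — is the conceptual crux of the argument.
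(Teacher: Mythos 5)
Your proof is correct. Note that the paper does not prove this lemma at all --- it is quoted verbatim from Bhatia's \emph{Matrix Analysis} (Theorem IX.2.10) --- and your argument is essentially the proof given in that cited source (going back to Araki's proof of the Araki--Lieb--Thirring inequality): reduce the unitarily-invariant-norm inequality to weak log-majorization, strip the log-majorization down to the operator-norm case via antisymmetric tensor powers, settle that case by L\"owner--Heinz after the normalization $BAB \preceq I$, and obtain the $t\geq 1$ direction by substituting $A^t, B^t$ with exponent $1/t$. All the individual steps (multiplicativity of $\wedge^k$, its compatibility with $X\mapsto X^t$ on positive matrices, the identification of $\|\wedge^k M\|_\infty$ with the product of the top $k$ eigenvalues, and the perturbation to the positive definite case) are sound, so nothing is missing.
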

\begin{lemm}
	[{\cite[Example II.3.5]{Bha97}}] \label{lemm:Bha2}
	Let $x,y \in \mathbb{R}_+^d$ (the set of $d$-dimensional vectors of non-negative real numbers).
	Then 
	\begin{align}
	x\prec_w y \text{ implies } x^t \prec_w y^t
	\end{align}
	for all $t\geq 1$.
\end{lemm}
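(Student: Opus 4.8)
The plan is to give a short, self-contained argument by summation by parts, which is just the concrete instantiation of the principle that a nondecreasing convex function preserves weak majorization; I would present the computation directly rather than invoke that principle as a black box, because it makes the role of the hypothesis $t\geq 1$ transparent.

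First I would reduce to the case in which $x$ and $y$ are already arranged in decreasing order. Since $t\geq 1$, the map $u\mapsto u^t$ is increasing on $\mathbb{R}_+$, so it commutes with the decreasing rearrangement: $(x^t)_j^\downarrow = (x_j^\downarrow)^t$, and likewise for $y$. Writing $x_1\geq\cdots\geq x_d\geq 0$ and $y_1\geq\cdots\geq y_d\geq 0$ for the sorted entries and setting $A_m\triangleq\sum_{j=1}^m x_j$, $B_m\triangleq\sum_{j=1}^m y_j$, the hypothesis $x\prec_w y$ becomes $A_m\leq B_m$ for all $1\leq m\leq d$, and the goal reduces to $\sum_{j=1}^k x_j^t \leq \sum_{j=1}^k y_j^t$ for each $k$.

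Fix $k$. Because $\phi(u)=u^t$ is convex on $\mathbb{R}_+$, the subgradient inequality gives $y_j^t - x_j^t \geq t\,x_j^{t-1}(y_j-x_j)$ for every $j$. Summing over $j=1,\ldots,k$ and writing $g_j\triangleq t\,x_j^{t-1}$, $\Delta_j\triangleq y_j-x_j$, and $D_m\triangleq\sum_{j=1}^m\Delta_j = B_m-A_m\geq 0$, it suffices to show $\sum_{j=1}^k g_j\Delta_j\geq 0$. Summation by parts yields
\[
\sum_{j=1}^k g_j\Delta_j = \sum_{j=1}^{k-1}(g_j-g_{j+1})\,D_j + g_k D_k .
\]
Now $g_k\geq 0$ and $D_j\geq 0$ for all $j$, and $g_j$ is nonincreasing in $j$ because $x_j$ is decreasing and $t-1\geq 0$ makes $x_j^{t-1}$ nonincreasing. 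Hence every term on the right is nonnegative, so $\sum_{j=1}^k(y_j^t-x_j^t)\geq 0$; as $k$ was arbitrary, $x^t\prec_w y^t$.

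I do not expect a serious obstacle here, so the useful thing is to pin down exactly where $t\geq 1$ is consumed: once to pass to the sorted representation (order preservation of $u\mapsto u^t$) and, more critically, to force the subgradient sequence $g_j=t\,x_j^{t-1}$ to be nonincreasing, which is precisely what makes the summation-by-parts terms nonnegative. This is the crux, and it is also the exact point at which the claim collapses for $t<1$: there $u\mapsto u^t$ is concave, $g_j$ may increase, and one finds counterexamples (e.g.\ $x=(1,1)\prec(2,0)=y$ fails to give $x^t\prec_w y^t$ once $2^t<2$). I would therefore keep the convexity-plus-Abel computation as the whole proof, since it is both elementary and faithful to why monotonicity of the exponent is indispensable.
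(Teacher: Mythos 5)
Your proof is correct, and it is worth noting that the paper itself offers no proof of this lemma at all: it is stated purely as a citation to \cite[Example II.3.5]{Bha97}, where it is an instance of the general principle that a nondecreasing convex function $\phi$ on $\mathbb{R}_+$ preserves weak majorization, i.e.\ $x\prec_w y$ implies $\phi(x)\prec_w\phi(y)$. What you have written is essentially the standard proof of that general principle, specialized to $\phi(u)=u^t$: the reduction to sorted vectors is legitimate because $u\mapsto u^t$ is increasing on $\mathbb{R}_+$, the subgradient inequality $y_j^t-x_j^t\geq t\,x_j^{t-1}(y_j-x_j)$ is the correct use of convexity, and the Abel summation cleanly separates the two hypotheses: $D_j\geq 0$ comes from $x\prec_w y$, while the nonincreasing nature of $g_j=t\,x_j^{t-1}$ comes from $t\geq 1$ together with the decreasing arrangement of $x$. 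Relative to the paper, your argument buys self-containedness and an explicit accounting of where $t\geq 1$ enters (your $t<1$ counterexample $x=(1,1)$, $y=(2,0)$ is also valid); the paper's citation buys brevity, which is reasonable since the result is classical. One cosmetic point you may wish to smooth over: when some $x_j=0$, the expression $t\,x_j^{t-1}$ requires the convention $0^0=1$ in the case $t=1$ (where the lemma is trivial anyway), and equals $0$ for $t>1$, in which case the subgradient inequality degenerates to $y_j^t\geq 0$ and monotonicity of $g_j$ is preserved since all subsequent entries also vanish; so the argument survives zero entries, but a referee might ask.
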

\begin{lemm} 
	[See, e.g.~{\cite[Section 2.2]{Car09}}] \label{lemm:convex}
	Let $f$ be a convex function on real lines. Then $A\preceq B$ implies
	\begin{align}
	\Tr\left[ f(A) \right] \leq \Tr \left[ f(B) \right].
	\end{align} 
\end{lemm}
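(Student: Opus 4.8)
The plan is to reduce this operator-level statement to a scalar comparison of eigenvalues. Write the eigenvalues in decreasing order, $\lambda_1^\downarrow(A) \geq \cdots \geq \lambda_d^\downarrow(A)$ and likewise for $B$, and invoke Weyl's monotonicity theorem: if $A \preceq B$, then $\lambda_j^\downarrow(A) \leq \lambda_j^\downarrow(B)$ for every $1 \leq j \leq d$. This is immediate from the Courant--Fischer min--max characterization $\lambda_j^\downarrow(A) = \max_{\dim V = j}\, \min_{0 \neq v \in V} \langle v, Av\rangle / \langle v, v\rangle$: since $A \preceq B$ gives $\langle v, Av\rangle \leq \langle v, Bv\rangle$ for all $v$, the pointwise domination of Rayleigh quotients survives both the inner minimization over $v\in V$ and the outer maximization over $V$.

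Once the spectra are aligned in this way, the conclusion follows from the spectral form of the trace, $\Tr[f(A)] = \sum_{j=1}^d f\!\left(\lambda_j^\downarrow(A)\right)$ and $\Tr[f(B)] = \sum_{j=1}^d f\!\left(\lambda_j^\downarrow(B)\right)$. Comparing the two sums term by term, it suffices to have $f\!\left(\lambda_j^\downarrow(A)\right) \leq f\!\left(\lambda_j^\downarrow(B)\right)$ for each $j$, which holds as soon as $f$ is nondecreasing on the interval spanned by the spectra of $A$ and $B$; summing over $j$ then yields $\Tr[f(A)] \leq \Tr[f(B)]$.

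The one point that deserves care --- and where a naive reading can go wrong --- is that the termwise step needs $f$ to be \emph{nondecreasing}, not merely convex, since already in dimension one a non-monotone convex function (e.g. $f(t)=t^2$ on $\bbR$) violates the inequality. In the sequel the lemma is applied to $f(t)=t^{1+s}$ with $s \geq 0$ on $[0,\infty)$, which is both convex and increasing, so the hypothesis is met and no obstruction arises. If one prefers a route that foregrounds convexity directly, an equivalent argument runs through Klein's inequality $\Tr[f(B) - f(A) - f'(A)(B-A)] \geq 0$: combining it with $f' \geq 0$ (so that $f'(A) \succeq 0$) and $B - A \succeq 0$ forces $\Tr[f'(A)(B-A)] \geq 0$, using that the trace of a product of two positive semi-definite matrices is nonnegative. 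Either way the substantive ingredient is the eigenvalue monotonicity delivered by Weyl's theorem, and I expect this --- together with the bookkeeping about monotonicity of $f$ --- to be the only real obstacle.
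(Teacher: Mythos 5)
Your proof is correct, and it is essentially the argument behind the paper's citation: \cite[Section 2.2]{Car09} derives this monotonicity statement from the min--max principle (Weyl monotonicity of eigenvalues) followed by a termwise comparison of spectra, exactly as you do. Moreover, the ``point that deserves care'' you raise is a genuine defect of the lemma as printed, not mere bookkeeping: convexity of $f$ is not the operative hypothesis, and the statement is false as written --- with $f(t)=t^2$, taking $A=-2I \preceq -I = B$ gives $\Tr\left[f(A)\right] = 4d > d = \Tr\left[f(B)\right]$, and even restricting to the positive semi-definite cone, $f(t)=(t-2)^2$ with $A=0 \preceq I = B$ fails similarly. What the inequality actually requires (and what the cited reference assumes) is that $f$ be nondecreasing on an interval containing the spectra of $A$ and $B$; that is the hypothesis your Weyl argument uses, and it is also what your alternative route via Klein's inequality uses through $f'\geq 0$. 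The paper's application is unharmed: Lemma \ref{lemm:convex} is invoked only for $x\mapsto x^t$ with $t\geq 1$ acting on positive semi-definite operators, where the function is both increasing and convex, so the proof of Theorem \ref{theo:main} stands once the lemma's hypothesis is corrected from ``convex'' to ``nondecreasing''.
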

\begin{lemm}
	[Matrix H\"older's Inequality {\cite[Corollary IV.2.6]{Bha97}}] \label{lemm:Holder}
	Let $A,B\in\mathbb{M}_d^{+}$.
	Then 
	\begin{align}
	\Tr\left[ AB \right] \leq 
	\left( \Tr\left[A^{\frac1{\theta}}\right]\right)^\theta
	\left( \Tr\left[B^{\frac1{1-\theta}}\right]\right)^{1-\theta}
	\end{align}
	for all $0\leq \theta \leq 1$.
\end{lemm}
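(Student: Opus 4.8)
The plan is to reduce this matrix inequality to the ordinary (scalar) Hölder inequality by way of a von Neumann--type trace bound on the eigenvalues. Write $p = 1/\theta$ and $q = 1/(1-\theta)$, so that $p,q \geq 1$ and $1/p + 1/q = \theta + (1-\theta) = 1$; thus $p$ and $q$ are conjugate exponents. Since $A,B\in\mathbb{M}_d^{+}$ we have $\Tr[AB] = \Tr[A^{1/2}BA^{1/2}] \geq 0$, so the left-hand side is a genuine nonnegative real number and no absolute value is needed. The proof then proceeds in two moves: a reduction to a sum over aligned eigenvalues, followed by a purely scalar estimate.

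First I would establish the eigenvalue trace bound
\[
\Tr[AB] \leq \sum_{i=1}^d \lambda_i^\downarrow(A)\,\lambda_i^\downarrow(B).
\]
To see this, diagonalize $A = U\Lambda_A U^\dagger$ and $B = V\Lambda_B V^\dagger$ with the diagonal entries of $\Lambda_A,\Lambda_B$ arranged in decreasing order, and set $W = U^\dagger V$. Then $\Tr[AB] = \sum_{i,j} \lambda_i^\downarrow(A)\,\lambda_j^\downarrow(B)\,|W_{ij}|^2$, and the matrix $S$ with entries $S_{ij} = |W_{ij}|^2$ is doubly stochastic. Maximizing the linear functional $\sum_{i,j}\lambda_i^\downarrow(A)\lambda_j^\downarrow(B)\,S_{ij}$ over the Birkhoff polytope, the optimum is attained at a vertex, i.e.\ at a permutation matrix, so the sum is bounded by $\max_\pi \sum_i \lambda_i^\downarrow(A)\,\lambda_{\pi(i)}^\downarrow(B)$; the rearrangement inequality then identifies the identity permutation as the maximizer, yielding the displayed bound.

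Next I would apply the scalar Hölder inequality to this finite sum with the conjugate exponents $p,q$:
\[
\sum_{i=1}^d \lambda_i^\downarrow(A)\,\lambda_i^\downarrow(B) \leq \left(\sum_{i=1}^d \lambda_i(A)^p\right)^{1/p}\left(\sum_{i=1}^d \lambda_i(B)^q\right)^{1/q}.
\]
Since $\sum_i \lambda_i(A)^p = \Tr[A^p] = \Tr[A^{1/\theta}]$ and likewise $\sum_i \lambda_i(B)^q = \Tr[B^{1/(1-\theta)}]$, substituting $1/p = \theta$ and $1/q = 1-\theta$ gives exactly $\Tr[AB] \leq (\Tr[A^{1/\theta}])^\theta (\Tr[B^{1/(1-\theta)}])^{1-\theta}$. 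The boundary cases $\theta\in\{0,1\}$ are read off by continuity: as $\theta\to 0^+$ one has $(\Tr[A^{1/\theta}])^\theta \to \lambda_{\max}(A)$ and the claim reduces to $\Tr[AB]\leq \lambda_{\max}(A)\,\Tr[B]$, which holds because $A \preceq \lambda_{\max}(A)\,I$.

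I expect the first step, the von Neumann trace bound, to be the only substantive obstacle, the remaining two steps being entirely scalar. The delicate point is justifying that the doubly stochastic matrix $S$ may be replaced by the identity permutation; this rests on Birkhoff's theorem together with the rearrangement inequality, and one must keep the eigenvalues of $A$ and $B$ sorted in the \emph{same} (decreasing) order so that the aligned pairing is the maximizer rather than the minimizer. If one prefers to stay within the majorization toolkit already assembled above, an alternative is to first prove $\lambda(AB) \prec_{\log} \bigl(\lambda^\downarrow(A)\lambda^\downarrow(B)\bigr)$ and invoke $\log x\prec_{\log}\log y \Rightarrow x\prec_w y$, but the direct rearrangement argument is shorter and self-contained.
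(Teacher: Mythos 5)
Your proof is correct, but note that the paper itself offers no proof of this lemma: it is imported as a black box from Bhatia's \emph{Matrix Analysis} (Corollary IV.2.6), where it arises as a special case of a H\"older inequality for arbitrary unitarily invariant norms, proved through symmetric gauge functions and weak majorization of singular values --- machinery that yields more (all Schatten norms, arbitrary matrices) but is not self-contained. Your route is genuinely different and more elementary: first von Neumann's trace inequality $\Tr[AB] \le \sum_{i=1}^d \lambda_i^\downarrow(A)\,\lambda_i^\downarrow(B)$ for positive semi-definite $A,B$, proved via the double stochasticity of $S_{ij}=|W_{ij}|^2$ for unitary $W=U^\dagger V$, the fact that a linear functional on the Birkhoff polytope attains its maximum at an extreme point (a permutation matrix, by Birkhoff's theorem), and the rearrangement inequality identifying the identity permutation as optimal when both spectra are sorted decreasingly; then the scalar H\"older inequality on the aligned eigenvalue sequences together with $\Tr\left[A^{1/\theta}\right]=\sum_i\lambda_i(A)^{1/\theta}$. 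Each of these steps is sound, and your handling of the boundary cases $\theta\in\{0,1\}$ by continuity --- the claim degenerating to $\Tr[AB]\le\lambda_{\max}(A)\Tr[B]$, which follows from $A\preceq\lambda_{\max}(A)I$ --- is the right reading of the statement there. What the citation buys the paper is brevity and generality; what your argument buys is a short, self-contained proof at exactly the level of generality (trace functional, positive semi-definite operators) that the main theorem actually uses.
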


\section{Main Result} \label{main}

We first recall a few notations. Let $\mathcal{X} = \{1,2,\ldots, |\mathcal{X}| \}$ be a finite alphabet.
Denote by $\mathscr{P}(\mathcal{X})$ the set of probability distributions on $\mathcal{X}$.
Fix a (separable) Hilbert space $\mathcal{H}$.
The set of density operators (i.e.~positive semi-definite operators with unit trace) on $\mathcal{H}$ is defined as $\mathcal{S(H)}$.
%A \emph{(discrete memoryless) classical-quantum channel} is a mapping $W$ from the finite set $\mathcal{X}$ to the density operators $\mathcal{S(H)}$, which outputs $W_x\in\mathcal{S(H)}$ with an input $x\in\mathcal{X}$.
Denote the set of all classical-quantum (c-q) channels $W$ from $\mathcal{X}$ to $\mathcal{S(H)}$ by $\mathscr{W}(\mathcal{X})$.

%For any $R>0$, the \emph{random coding exponent} and the \emph{sphere packing exponent} of a c-q channel $W \in \mathscr{W}(\mathcal{X})$ are respectively defined as 
%\begin{align} \label{eq:random}
%E_\text{r}(R) \triangleq \max_{0\leq s\leq 1} \left\{
%\max_{P\in\mathscr{P}(\mathcal{X})} E_0(s,P) - sR
%\right\},
%\end{align}
%and
%\begin{align} \label{eq:sp}
%E_\text{sp}(R) \triangleq \sup_{s\geq 0}\left\{
%\max_{P\in\mathscr{P}(\mathcal{X})} E_0(s,P) - sR
%\right\},
%\end{align}
%where $E_0:[0,\infty)\times \mathscr{P(X)}\to\mathbb{R}$ is called an \emph{auxiliary} function
%(see \cite{BH98, Hol00}):
%\begin{align} \label{eq:E0}
%E_0(s,P) \triangleq -\log \Tr \left[
%\left( \sum_{x\in\mathcal{X}} P(x) \cdot W_x^{\frac1{1+s}}\right)^{1+s}
%\right]
%\end{align}
%for all $P\in\mathcal{P(X)}$ and $s\in[0,\infty) =: \mathbb{R}_+$.
%The \emph{reliability function} of the channel is defined as
%$E(R) = \limsup_{n\to\infty} -\frac1n \log e(W) $.

\begin{theo} \label{theo:main}
	Given a classical-quantum channel $W\in\mathscr{W}(\mathcal{X})$ and a distribution $P\in\mathscr{P}(\mathcal{X})$,
	the auxiliary function $E_0(s,P)$ is concave in $s\geq 0$.
\end{theo}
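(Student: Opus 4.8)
The plan is to recast the claim as the \emph{log-convexity} of an auxiliary trace functional and then establish it through a single two-point inequality built entirely from the geometric-mean toolkit. Write $t=1+s$, $p=1/t$, set $A_s\triangleq\sum_{x\in\mathcal X}P(x)W_x^{1/(1+s)}$ and $g(s)\triangleq\Tr[A_s^{\,1+s}]$, so that $E_0(s,P)=-\log g(s)$. Since $-\log$ is involved, $E_0$ is concave in $s$ if and only if $\log g$ is convex, and I would prove the latter in its full two-point form: fixing $s_0,s_1\ge 0$ and $\lambda\in[0,1]$, with $t_i=1+s_i$, $p_i=1/t_i$, $s_\lambda=(1-\lambda)s_0+\lambda s_1$ and $t_\lambda=(1-\lambda)t_0+\lambda t_1\ge 1$, the goal is
\[
g(s_\lambda)\le g(s_0)^{1-\lambda}\,g(s_1)^{\lambda}.
\]
The whole argument is a non-commutative upgrade of Gallager's scalar computation in \cite[Theorem 5.6.3]{Gal68}, with the $s$-weighted geometric mean playing the role of the scalar interpolation.

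The first step reduces the base operator. I would choose the weight $\mu=\lambda t_1/t_\lambda\in[0,1]$; a direct computation gives $(1-\mu)/t_0+\mu/t_1=1/t_\lambda$, i.e.\ $(1-\mu)p_0+\mu p_1=p_\lambda$. As $W_x^{p_0}$ and $W_x^{p_1}$ commute, property~(a) of Proposition~\ref{prop:properties} yields $W_x^{p_0}\#_\mu W_x^{p_1}=W_x^{p_\lambda}$. Summing against $P$ and invoking the joint concavity of the geometric mean (property~(f) of Proposition~\ref{prop:properties}, iterated to convex combinations of $|\mathcal X|$ terms) gives $A_{s_\lambda}\preceq A_{s_0}\#_\mu A_{s_1}$. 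Since $x\mapsto x^{t_\lambda}$ is convex and monotone for $t_\lambda\ge 1$, Lemma~\ref{lemm:convex} then produces $g(s_\lambda)\le\Tr\!\left[(A_{s_0}\#_\mu A_{s_1})^{t_\lambda}\right]$.

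The second step passes from the geometric mean to an ordinary product and then splits it. Lemma~\ref{lemm:MA12} gives $\lambda(A_{s_0}\#_\mu A_{s_1})\prec_{\textnormal{log}}\lambda(A_{s_0}^{1-\mu}A_{s_1}^{\mu})$, hence $\prec_w$; raising to the power $t_\lambda\ge 1$ via Lemma~\ref{lemm:Bha2} and summing all eigenvalues yields $\Tr[(A_{s_0}\#_\mu A_{s_1})^{t_\lambda}]\le\Tr[(A_{s_1}^{\mu/2}A_{s_0}^{1-\mu}A_{s_1}^{\mu/2})^{t_\lambda}]$, where I use that $A_{s_0}^{1-\mu}A_{s_1}^{\mu}$ is similar to the displayed positive operator and so shares its eigenvalues. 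Applying Lemma~\ref{lemm:Bha1} with exponent $t_\lambda\ge 1$ in the trace norm (which equals the trace on positive operators) converts $(A_{s_1}^{\mu/2}A_{s_0}^{1-\mu}A_{s_1}^{\mu/2})^{t_\lambda}$ into $A_{s_1}^{\mu t_\lambda/2}A_{s_0}^{(1-\mu)t_\lambda}A_{s_1}^{\mu t_\lambda/2}$, so that $\Tr[(A_{s_0}\#_\mu A_{s_1})^{t_\lambda}]\le\Tr[A_{s_0}^{(1-\mu)t_\lambda}A_{s_1}^{\mu t_\lambda}]$. The weight $\mu$ was chosen precisely so that $(1-\mu)t_\lambda=(1-\lambda)t_0$ and $\mu t_\lambda=\lambda t_1$; matrix Hölder (Lemma~\ref{lemm:Holder} with $\theta=1-\lambda$) then closes the chain, giving $\Tr[A_{s_0}^{(1-\lambda)t_0}A_{s_1}^{\lambda t_1}]\le\Tr[A_{s_0}^{t_0}]^{1-\lambda}\Tr[A_{s_1}^{t_1}]^{\lambda}=g(s_0)^{1-\lambda}g(s_1)^{\lambda}$.

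I expect the middle step to be the principal obstacle: estimating the trace of a power of the non-commuting product $A_{s_0}^{1-\mu}A_{s_1}^{\mu}$ is exactly where non-commutativity defeats the naive classical manipulation, and it is precisely here that the Matharu--Aujla log-majorization (Lemma~\ref{lemm:MA12}) and the Araki--Lieb--Thirring-type norm inequality (Lemma~\ref{lemm:Bha1}) are indispensable. The only remaining care is technical: the geometric-mean calculus requires $A_{s_i}\in\mathbb{M}_d^{++}$, so I would first prove the inequality for faithful states in finite dimension and then remove positive-definiteness and pass to a separable $\mathcal H$ using the continuity in property~(h) of Proposition~\ref{prop:properties} together with a standard truncation and approximation argument.
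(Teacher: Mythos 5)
Your proposal is correct and follows essentially the same route as the paper: the same reduction to log-convexity of $t\mapsto\log\Tr\bigl[(\mathbb{E}\,W_X^{1/t})^{t}\bigr]$, the same use of commutativity and joint concavity of the geometric mean, the same chain of Matharu--Aujla log-majorization (Lemma~\ref{lemm:MA12}), Lemma~\ref{lemm:Bha2}, Lemma~\ref{lemm:Bha1}, and matrix H\"older (Lemma~\ref{lemm:Holder}), with your weight $\mu=\lambda t_1/t_\lambda$ being the paper's $\lambda=l\theta/t$ under renaming and your middle step being exactly the paper's Lemma~\ref{lemm:our}, inlined. The differences are cosmetic: you apply the majorization steps in a slightly cleaner logical order (majorize, then take powers, then trace), and you close the rank-deficient case by the same $\epsilon$-perturbation and continuity argument the paper uses.
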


\begin{proof} We first present the proof that only works when all $\{W_x\}_{x\in\cX}$ are full rank. The proof can then be relaxed to include the non-invertible case.

	Let $X$ be a random variable with distribution $P$, and denote by $\mathbb{E}$ the expectation with respect to $P$. Then it suffices to prove the convexity of the map:
	\begin{align} \label{eq:main1}
	f:t \mapsto \log \Tr \left[
	\left( \mathbb{E} \, W_X^{\frac1{t}}\right)^{t} \right]
	\end{align}
	for all $t\geq 1$.
	
	Before starting the proof, we first prepare the following lemma that is crucial in our derivations.
	\begin{lemm} \label{lemm:our}
		Let $A,B\in\mathbb{M}_d^{++}$. Then, for every $t \geq 1$ and $0\leq \lambda \leq 1$, we have
		\begin{align}
		\Tr\left[ \left( A\#_{\lambda}B \right)^t \right]
		\leq 
		\Tr \left[ A^{t (1-\lambda)} B^{t \lambda} \right].
		\end{align}
	\end{lemm}
	\begin{proof}
		From Lemma \ref{lemm:MA12}, we have
		\begin{align}
		\Tr \left[ A\#_{\lambda} B \right]
		&\leq 
		\Tr \left[ A^{\frac{1-\lambda}2} B^\lambda A^{\frac{1-\lambda}2} \right]\\
		&\leq \Tr \left[ \left( A^{\frac{t(1-\lambda)}2} B^{t\lambda}  A^{\frac{t(1-\lambda)}2} \right)^{\frac1t} \right],
		\end{align}
		where the last inequality follows from Lemma \ref{lemm:Bha1}.
		Next, applying Lemma \ref{lemm:Bha2} on the above inequality yields
		\begin{align}
		\Tr \left[ \left(A\#_{\lambda} B\right)^t \right]
		\leq \Tr \left[ \left( \left( A^{\frac{t(1-\lambda)}2} B^{t\lambda}  A^{\frac{t(1-\lambda)}2} \right)^{\frac1t}  \right)^t \right],
		\end{align}
		which completes the proof.
	\end{proof}
	
We now begin the proof of Theorem~\ref{theo:main}. These steps follow closely with those in Ref.~\cite[Theorem 5.6.3]{Gal68}. Let $l,r$, and $\theta$ be arbitrary numbers $1\leq l\leq r$, $0\leq \theta \leq1$, and define
	\begin{align}
	t = \theta l + (1-\theta) r.
	\end{align}
	Let $t\equiv 1+s \geq 1$. Then we prove the convexity of the map $f$, i.e.~
	\begin{align}
	f(t) \leq \theta f(l) + (1-\theta) f(r).
	\end{align}

	Define the number $\lambda$ by 
	\begin{align}
	\lambda = \frac{l\theta}{t}; \quad 1-\lambda = \frac{r(1-\theta)}{t}.
	\end{align}
	Then it follows that
	\begin{align}
	\frac1t = \frac{\theta}{t} + \frac{1-\theta}{t}
	 = \frac{\lambda}{l} + \frac{1-\lambda}{r}.
	\end{align}
	The convexity of the geometric means (see item (f) in Proposition \ref{prop:properties}) implies that
	\begin{align}
	\mathbb{E} \left[W^{1/t}\right]
	&= \mathbb{E} \left[ W^{\lambda /l } W^{(1-\lambda)/r} \right]\\
	&= \mathbb{E} \left[ W^{1 /l } \#_{1-\lambda} W^{1 /r } \right] \\
	&\preceq  \mathbb{E} \left[ W^{1 /l } \right] \#_{1-\lambda} \mathbb{E} \left[ W^{1 /r } \right].
	\end{align} 
	Now let $A\equiv \mathbb{E} \left[ W^{1 /l } \right]$ and $B\equiv\mathbb{E} \left[ W^{1 /r } \right]$.
	Since $x\mapsto x^t$ for $t\geq 1$ is a convex function, Lemma \ref{lemm:convex} leads to
	\begin{align}
	\Tr\left[\left (\mathbb{E} \left[W^{1/t}\right] \right)^t\right]
	&\leq \Tr \left[ \left( A\#_{1-\lambda} B
	\right)^t \right] \\
	&\leq \Tr \left[ A^{t \lambda} B^{t (1-\lambda) } \right] \label{eq:main2}\\
	&= \Tr \left[ A^{l \theta } B^{r (1-\theta) } \right], \label{eq:main3}
	\end{align}
	where Eq.~\eqref{eq:main2} follows from Eq.~(\ref{lemm:our}). Finally, applying matrix H\"older's inequality, Lemma \ref{lemm:Holder}, on the right-hand side of Eq.~\eqref{eq:main3}, we have
	\begin{align}
	\Tr\left[\left (\mathbb{E} \left[W^{1/t}\right] \right)^t\right]
	&\leq
	\left( \Tr\left[A^{l}\right]\right)^\theta
	\left( \Tr\left[B^{r}\right]\right)^{1-\theta}\\
	&= \left( \Tr \left( \mathbb{E}\left[ W^{1 /l }  \right] ^{l}  \right)\right)^\theta
	\left( \Tr \left( \mathbb{E}\left[ W^{1 /r } \right]^{r}\right)\right)^{1-\theta}.
	\end{align}
	Taking logarithm on the above inequality arrives at $ f(t) \leq \theta f(l) + (1-\theta) f(r)$. This completes the proof for the special case of invertible channel outputs.

The above proof assumes that every realization of the density operator $W_X$ is positive definite.
Hence, each density operator $W_x^{\lambda/l} W_x^{(1-\lambda)/r}$ can be expressed as a geometric mean $W_x^{1/l} \#_s W_x^{1/r}$.
	However, if $W_x$ is not invertible for some $x\in\cX$, then consider a sequence of positive definite operators $W_{x,\epsilon} \triangleq W_x + \epsilon I$ that approximate $W_x$, i.e.,~$\lim_{\epsilon\searrow 0} W_{x,\epsilon} = W_x$. The geometric mean of $W_x^{1/l}$ and $W_x^{1/r}$ is defined as 
	\begin{align} \label{eq:general}
	W_x^{1/l} \#_s W_x^{1/r} \triangleq \lim_{\epsilon\searrow 0} W_{x,\epsilon}^{1/l} \#_s W_{x,\epsilon}^{1/r},
	\end{align}
by the continuity of the geometric means (see item (h) in Proposition \ref{prop:properties}).  Note that the concavity of the geometric means, and Lemmas \ref{lemm:MA12} and \ref{lemm:our} still hold if we use the definition in Eq.~\eqref{eq:general}. We can thus obtain a complete proof. 

\end{proof}

\section{Conclusion} \label{sec:conclusion}
In this paper, we proved an open question that was originally raised in \cite{Hol00}. A partial result to this question was obtained in \cite{FY06}; however, we can extend the concavity of the auxiliary function $E_0(s,P)$ for all $s\geq0$. Consequently, the definition of auxiliary function (\ref{eq:E0}) of a classical-quantum channel exactly recovers its classical counterpart \cite{Gal68}, a quantity that plays a crucial role in classical information theory.  We hope that this concave property will also allow us to better characterize the error probability of a classical-quantum channel in the finite regime. 

\section*{Acknowledgements}
MH is supported by an ARC Future Fellowship under Grant FT140100574.

\printbibliography
%\bibliographystyle{ieeetran}
%\bibliography{reference}

\end{document}